%2345678901234567890123456789012345678901234567890123456789012345678901234567890
%        1         2         3         4         5         6         7         8

\documentclass[journal,twoside,web]{ieeecolor}
\usepackage{lcsys}
\usepackage{cite}
\usepackage{amsmath,amssymb,amsfonts}
\usepackage{tikz}  
\usepackage{algorithmic}
\usepackage{graphicx}
\usepackage{textcomp}
\usepackage{graphicx}
\usepackage{caption}
\usepackage{makecell}
\captionsetup{font=normalsize}
\usepackage{mwe}
\usepackage{hyperref}
\usepackage{comment}
\usepackage{fancyhdr}
%%%%%%%%%%%%%%%%%%%%%%%%%%%%%%%%%%%%%%%%%%%%%%%%%%%%%%%%%%%%%%%%%%%%%%%%%%%%%%%%
%%%%%%%%%%%%%%%%%%%%%%%%%%%%%%%%% ACSD %%%%%%%%%%%%%%%%%%%%%%%%%%%%%%%%%%%%%%%%%
%%%%%%%%%%%%%%%%%%%%%%%%%%%%%%%%%%%%%%%%%%%%%%%%%%%%%%%%%%%%%%%%%%%%%%%%%%%%%%%%

%%%% Packages
\usepackage{xspace}
\usepackage{soul}
\usepackage{mathtools}
\usepackage{makecell}
\usepackage{cite}
\usepackage{nicefrac}
\usepackage{float}
\usepackage{wrapfig}
\usepackage{color}
%\usepackage{amsthm}

%%%% Arrows

%%%% Theorems

\newtheorem{asm}{Assumption}
\newtheorem{prp}{Proposition}
\newtheorem{thm}{Theorem}
\newtheorem{crl}{Corollary}
\newtheorem{rem}{Remark}
%\newtheorem*{rem*}{Remark}
%\newtheorem{exm}{Example}
%\newtheorem*{exm*}{Example}

%\newtheorem*{prb*}{Problem}

%%%% Abbreviations

%%%% Sets

\newcommand{\R}{\ensuremath{\mathbb{R}}}

%%%% Texts
\newcommand{\diag}{\ensuremath{\text{diag}}}

%%%% Fonts

%%%% Symbols

			% Transpose
		% Generalized derivative
		% Capital kappa
		% Kappa-ell
		% Kappa-infinity

%%%% References
  
%\newcommand{\refFig}[1]{Figure\, \ref{#1}} 

%%%% Wraps

%%%% Probability

%%%% Review

%\definecolor{dgreen}{rgb}{0.0, 0.5, 0.0}
%\newcommand{\green}[1]{\textcolor{dgreen}{#1}}

%%%% Miscellaneous
	% Logical formula separator

%%%% Environments
\makeatletter
\newcommand{\subalign}[1]{%
	\vcenter{%
		\Let@ \restore@math@cr \default@tag
		\baselineskip\fontdimen10 \scriptfont\tw@
		\advance\baselineskip\fontdimen12 \scriptfont\tw@
		\lineskip\thr@@\fontdimen8 \scriptfont\thr@@
		\lineskiplimit\lineskip
		\ialign{\hfil$\m@th\scriptstyle##$&$\m@th\scriptstyle{}##$\crcr
			#1\crcr
		}%
	}
}

%%%%%%%%%%%%%%%%%%%%%%%%%%%%%%%%%%%%%%%%%%%%%%%%%%%%%%%%%%%%%%%%%%%%%%%%%%%%%%%%
%%%%%%%%%%%%%%%%%%%%%%%%%%%%%%%%%%%%%%%%%%%%%%%%%%%%%%%%%%%%%%%%%%%%%%%%%%%%%%%%
%%%%%%%%%%%%%%%%%%%%%%%%%%%%%%%%%%%%%%%%%%%%%%%%%%%%%%%%%%%%%%%%%%%%%%%%%%%%%%%%
\pagestyle{empty}

%\IEEEoverridecommandlockouts                              % This command is only needed if 
                                                          % you want to use the \thanks command

\overrideIEEEmargins     

\title{\LARGE \bf
Generalized Super-Twisting Observer for a class of interconnected nonlinear systems with uncertainties}

\author{Rania Tafat$^1$, Jaime A. Moreno$^2$ and Stefan Streif$^{1}$ % <-this % stops a space
\thanks{This work has been accepted in the IEEE Control Systems Letters,
10.1109/LCSYS.2025.3575432; ©2025 IEEE. Personal use of this material
is permitted. Permission from IEEE must be obtained for all other uses, in
any current or future media, including reprinting/republishing this material
for advertising or promotional purposes, creating new collective works,
for resale or redistribution to servers or lists, or reuse of any copyrighted
component of this work in other works.
}
\thanks{*This work was partially supported by UNAM-PAPIIT IN106323.}% <-this % stops a space
\thanks{*This research has been performed as part of the project ReSIDA-H2. This
project was funded by the European Social Fund Plus (ESF Plus) and the
Free State of Saxony.}
\thanks{$^{1}$ Technische Universit\"at Chemnitz, 09126 Chemnitz, Germany, Automatic Control and System Dynamics Lab; E-mail:
        {\tt\small \{rania.tafat, stefan.streif\}@etit.tu-chemnitz.de}}%
\thanks{$^{2}$ Eléctrica y Computación
Instituto de Ingeniería-UNAM
Universidad Nacional Autónoma de México; E-mail:
        {\tt\small JMorenoP@iingen.unam.mx}}
}

\begin{document}

\maketitle
\thispagestyle{empty}

%%%%%%%%%%%%%%%%%%%%%%%%%%%%%%%%%%%%%%%%%%%%%%%%%%%%%%%%%%%%%%%%%%%%%%%%%%%%%%%%
\begin{abstract}
The Generalized Super-Twisting Observer (GSTO) is extended for a strongly observable class of nonlinearly interconnected systems with bounded uncertainties/perturbations.
A nonsmooth strong Lyapunov function is used to prove the finite-time convergence of the proposed observer to the true system's trajectories, in the presence of the uncertainties. 
A case study on the interaction between two food production systems is presented, comparing the proposed observer with the High Gain observer. The results emphasize the critical role of the GSTO’s discontinuous term in achieving exact estimation.
\end{abstract}

\begin{IEEEkeywords}
Nonlinear systems, discontinuous observers, interconnected systems.
\end{IEEEkeywords}

%%%%%%%%%%%%%%%%%%%%%%%%%%%%%%%%%%%%%%%%%%%%%%%%%%%%%%%%%%%%%%%%%%%%%%
\section{Introduction} 
\IEEEPARstart{I}{nterconnected} nonlinear systems with uncertainties arise in various fields, including aerospace engineering, power systems, transportation networks and controlled agriculture environments.
The design of observers for such systems is, therefore, a topic of significant relevance and has been extensively studied \cite{wu2004decentralized,koo2015decentralized}.
A common approach is adressing the control and observer design simultaneously.
The control can steer the system toward a desired reference, where a Luenberger observer suffices for each subsystem, resulting in a decentralized observer \cite{dhbaibi2009h}. 
Alternatively, the controller  faciliates the convergence of a centralized observer by compensating the unknown uncertainties, as seen in the High Gain (HG) decentralized observer \cite{wu2004decentralized}, the adaptive neural network observer \cite{liu2024observer}, and the Luenberger observer aided by the fuzzy decentralized control \cite{li2019observer}.
Another approach focuses on the observer design of interconnected systems with uncertainties alone, using appropriate tools that guarantee convergence. 
For instance, in the presence of reconstructible unknown inputs and adequate assumptions, a distributed Luenberger based observer suffices \cite{chakrabarty2016distributed}. 
In a similar sense, if knowledge of structural characteristics of the uncertainties is available, the observer can compensate for them \cite{yan2003robust}.
Fuzzy observer techniques are also widely used for interconnected systems with uncertainties, but asymptotic convergence is guaranteed only in the absence of the latter \cite{koo2015decentralized}.
In fact, estimation errors provided by these methods are, at best, bounded or input-to-state (dynamically) stable with respect to the unknown uncertainties \cite{praly2001observers}.
This is a common drawback of continuous observers, which are often unable to compensate for system uncertainties and unknown inputs when their structural knowledge is missing \cite{moreno2013discontinuous}.

The classical sliding mode observer (SMO) was introduced to address this challenge, enabling exact state estimation despite bounded unknown uncertainties, which is possible thanks to the observer's discontinuous term \cite{utkin2013sliding}. 
Consequently,
 the SMO was also adopted for the robust observer design of interconnected systems with uncertainties, see e.g. \cite{farahani2022sliding} and references within. 
This conventional approach has however inherent limitations when it comes to interconnected systems and convergence guarantees are challenging to obtain.
Fortunately, several other discontinuous techniques have emerged, as the second order sliding mode observer (SOSMO) \cite{
levant1998robust}, the Super Twisting Algorithm (STA) \cite{davila2010variable} and the General Super Twisting Observer (GSTO) which, as its name suggests, generalizes STA for two dimensional systems exhibiting richer dynamical characteristics \cite{salgado2011generalized, moreno2013discontinuous}.
As for higher dimensional systems, the Multivariable Generalized Super-Twisting Algorithm (MGSTA) shows significant exploitation potential \cite{nagesh2014multivariable}.
Indeed, when the MGSTA consists of $k$ decoupled subsystems running in parallel, the origin of this algorithm remains finite-time stable despite unknown bounded uncertainties. 
A natural extension of this approach involves coupling these algorithms by treating their interconnections as perturbations, as explored in \cite{lopez2019generalised}. 
%This algorithm is widely exploited for the conceptualizing robust finite-time stabilizing control strategies for interconnected nonlinear systems \cite{koch2022conditioned, fang2015multivariable,dong2017adaptive} but still needs theoretical improvment for exploitation in the observer design community.
For observer design of linearly interconnected multi-agent systems without communication, one may refer to \cite{fomichev2024cascade}.
In addition, \cite{lopez2023quasicontinuous} also leverages MGSTA properties to obtain exact state estimation for a class of continuous, globally Lipschitz systems with an observable linear part in the presence of unknown inputs.
This letter extends that work by aiming to generalize the results  to a broader class of systems, by relaxing the continuous and global Lipschitz conditions and the necessity of an observable linear part. 
Instead, we focus on a class of nonlinearly interconnected nonlinear subsystems, which may exhibit discontinuities or multivariate interactions as discussed in Section \ref{section:Problem}. 
We begin by establishing that this class of systems is strongly observable in the presence of unknown inputs \cite{moreno2014dynamical} and introduce an observer plant for it using the GSTO. Section \ref{section:convergence} proves that the proposed plant drives the estimation error toward nonlinearly interconnected GSTAs whose origin is finite time stable, using a nonsmooth quadratic Lyapunov function \cite{moreno2008lyapunov}.
A motivating example is given in Section \ref{section:numerical_results} along with simulations of the proposed method  compared with a HG observer. Finally, Section \ref{section:conclusion} concludes this letter.
\section{Problem statement and main results}
\label{section:Problem} Consider a system  that can be transformed into the following interconnection of $N \geq 2$ subsystems
\begin{equation}
\begin{split}
\Sigma_1 &:
\begin{cases}
\dot{x}_{11} &= f_{11} (y,u) + g_1(y,u,t)x_{12} ,\\
\dot{x}_{12} &= f_{12}(x,u) + \delta_1(x,u,w,t), 
\end{cases} \\
\Sigma_2 &: \begin{cases}
    \dot{x}_{21} &= f_{21} (y,u,x_{12})+ g_2(y,u,t)x_{22}, \\
    \dot{x}_{22} &= f_{22}(x,u) + \delta_2(x,u,w,t),
\end{cases}\\
&\vdots\\
\Sigma_N &: \begin{cases}
    \dot{x}_{N1} &= f_{N1}\left(y,u,x_{12},x_{22}, \cdots, x_{(N-1)2}\right)  \\ & + g_N(y,u,t)x_{N2},\\
    \dot{x}_{N2} &= f_{N2}(x,u) + \delta_N(x,u,w,t),
\end{cases}\\
y &= \begin{bmatrix}
    x_{11} & x_{21} & \cdots & x_{N1}
\end{bmatrix}^\top,
\label{eq:system}
\end{split}
\end{equation}
with $i=1,\cdots,N$. Here $x \in \R^{2N}$ are the states, $u \in \R^m$ is a known input, $w \in \R^{N}$ are unknown inputs and $y \in \R^N$ is the measured output. 
$f_{i1}$ are known continuous functions and $f_{i2}$ are known possibly discontinous or multivalued functions. 
$\delta_{i}$ represent the unknown uncertain terms.
$g_i$ are known continuous bounded and strictly positive functions, i.e. without loosing generality, there exist positive constants 
$
0 < g_{i,m} \leq g_i( y, u,t) \leq g_{i,M}.
$
It is assumed that the system \eqref{eq:system} has a solution in the sense of \textit{Filippov} (see 1.2.1 in  \cite{bacciotti2005liapunov}).
\begin{rem}
  System \eqref{eq:system} represents a broad class of interconnected systems characterized by two types of nonlinear interconnections. The first state of each subsystem allows a cascaded nonlinear interconnection, while the second state accommodates an arbitrary nonlinear interconnection, which maybe unknown.
  Notably, the proposed observer in this work achieves finite-time convergence even when $\delta_{i}$ is non-vanishing, provided that Assumption \ref{asm:GSTO_interconnection} (see \eqref{eq:PertBound} below) holds. %Note also that the second state can be driven by an unknown but bounded perturbation term.
   \end{rem}
Recall that, according to \cite{moreno2014dynamical, hautus1983strong }, system \eqref{eq:system} is \emph{strongly observable} (roughly) if the state $x$ can be obtained from the time derivatives of the known input $u$ and of the measured output $y$, despite of the unknown input $\delta$.  %Before introducing the proposed interconnected observer, the next Proposition shows that system  in the sense of , i.e.  
%Before introducing the proposed interconnected observer, the main result of this letter, in the next Proposition we state an observability property of the class of systems \eqref{eq:system}.
\begin{prp}
    The uncertain interconnected system \eqref{eq:system} is \emph{globally strongly observable}.% in the sense of \cite{moreno2014dynamical, hautus1983strong }.
 \label{prp:observability}
\end{prp}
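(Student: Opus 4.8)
The plan is to exploit the cascade (triangular) structure of the \emph{first} equations of each subsystem in \eqref{eq:system}, together with the fact that the gains $g_i$ are known and uniformly bounded away from zero, to express the whole state $x$ as an explicit function of $y$, $\dot y$ and $u$. The key observation is that the unknown terms $\delta_i$ (and the unknown input $w$) enter only the second equations $\dot x_{i2}$, which are never used in this reconstruction; hence strong observability in the sense of \cite{moreno2014dynamical, hautus1983strong} follows, and it is global because the lower bounds $g_{i,m}>0$ hold on the whole state space.

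Concretely, I would take two Filippov solutions $x^a$, $x^b$ of \eqref{eq:system} defined on a common time interval, driven by the same known input $u$ but (possibly) different unknown inputs, and producing the same output $y$; the goal is to show $x^a\equiv x^b$. Since $y$ consists exactly of the first components, we immediately get $x^a_{i1}=x^b_{i1}=y_i$ for all $i$. Each $y_i$ is absolutely continuous (indeed $C^1$ when $u$ is continuous), because $\dot y_i=f_{i1}(\cdot)+g_i(\cdot)x_{i2}$ with $x_{i2}$ absolutely continuous, so $\dot y_i$ is well defined a.e. I then argue by induction on $i=1,\dots,N$ that $x^a_{i2}=x^b_{i2}$: for $i=1$, the first equation of $\Sigma_1$ gives $g_1(y,u,t)\,(x^a_{12}-x^b_{12})=0$ a.e., so $g_1\ge g_{1,m}>0$ forces $x^a_{12}=x^b_{12}$; assuming $x^a_{j2}=x^b_{j2}$ for all $j<i$, the arguments of $f_{i1}(y,u,x_{12},\dots,x_{(i-1)2})$ coincide for the two solutions, so the first equation of $\Sigma_i$ yields $g_i(y,u,t)\,(x^a_{i2}-x^b_{i2})=0$ and again $x^a_{i2}=x^b_{i2}$. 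The same chain of identities is constructive: it gives $x_{i2}=\big(\dot y_i-f_{i1}(y,u,x_{12},\dots,x_{(i-1)2})\big)/g_i(y,u,t)$, solved recursively, so the state is recovered from $y$, $\dot y$ and $u$ alone, using only first-order output derivatives and uniformly over the whole space.

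I expect the only delicate point to be regularity/well-posedness rather than any genuine obstacle: one must justify that the output derivatives used in the reconstruction exist (which follows from the Filippov solution concept and the continuity of $f_{i1}$, $g_i$ and $u$) and verify that the possibly discontinuous or multivalued $f_{i2}$ and the uncertainties $\delta_i$ never intervene — which is precisely the case because the argument uses only the first equation of each $\Sigma_i$. A minor care item is to state the reconstruction uniformly in $i$ while respecting the specific interconnection pattern ($f_{i1}$ depending on $y$, $u$ and on $x_{12},\dots,x_{(i-1)2}$), but this is exactly what the induction handles.
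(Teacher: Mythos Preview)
Your proposal is correct and follows essentially the same approach as the paper: both arguments show that the map $(y,\dot y,u)\mapsto x$ built from the \emph{first} equations of each $\Sigma_i$ is independent of the unknown terms $\delta_i$ and globally invertible thanks to the cascade structure and the uniform positivity of the $g_i$. The paper presents this more compactly by writing down the observability map $\mathcal{O}(x,u)$ explicitly and noting its global invertibility, whereas you spell out the recursive inversion via induction and add the regularity discussion for Filippov solutions; the substance is the same.
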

\begin{proof}
   Extending the proof for the case $N = 1$ in \cite{moreno2013discontinuous} consider the observability map of the system \eqref{eq:system}
   \begin{equation*}
    \begin{split}
    \begin{bmatrix}
            y \\ \dot{y}
    \end{bmatrix} &= \mathcal{O}\left(x,u\right) =  \\&\begin{bmatrix}
            x_{11} \\ 
            \vdots \\ x_{N1} \\ f_{11} (y,u) + g_1(y,u,t)x_{12} \\  \vdots \\
            f_{N1}\left(y,u,x_{12},x_{22}, \cdots, x_{(N-1)2}\right) + g_N(y,u,t)x_{N2} 
        \end{bmatrix}.
        \end{split}
    \end{equation*}
    Since $\mathcal{O}\left(x,u\right)$ does not depend on $\delta$ and is globally invertible in $x$ for every $u \in \R^m$, the conclusion follows.
\end{proof}
Although strong observability does not assure the existence of a (continuous) unknown input observer \cite{moreno2014dynamical,hautus1983strong}, we propose the following discontinuous one, based on the GSTA \cite{salgado2011generalized, moreno2013discontinuous}, which can converge for bounded $\delta_i$ %unknown inputs 

\begin{equation}
\begin{split}
\hat{\Sigma}_1 &: \begin{cases}
    \dot{\hat{x}}_{11} &= -\gamma l_{11}g_1(y,u,t)\phi_{11}(e_{11}) + f_{11}(y,u) \\&+ g_1(y,u,t)\hat{x}_{12}, \\
    \dot{\hat{x}}_{12} &= -\gamma^2 l_{12} g_1(y,u,t) \phi_{12}(e_{11}) + f_{12}(\hat{x},u),
\end{cases}\\
%\hat{\Sigma}_2 &: \begin{cases}
 %   \dot{\hat{x}}_{21} &= - \gamma l_{21} g_2(y,u,t) \phi_{21}(e_{21}) + f_{21}(y,u,\hat{x}_{12}) \\ & + g_2(y,u,t)\hat{x}_{22}\\
    %\dot{\hat{x}}_{22} &= - \gamma ^2 l_{22} g_2(y,u,t) \phi_{22}(e) + f_{22}(\hat{x},u)
%\end{cases} \\
&\vdots\\
\hat{\Sigma}_N &: \begin{cases}\dot{\hat{x}}_{N1} &= - \gamma l_{N1} g_N(y,u,t) \phi_{N1}(e_{N1})\\ &+ f_{N1}\left(y,u,\hat{x}_2, \cdots, \hat{x}_{(N-1)2} \right), \\
\dot{\hat{x}}_{N2} &= - \gamma^2 l_{N2}g_N (y,u,t) \phi_{N2}(e_{N1}) + f_{N2}(\hat{x},u),
\end{cases}
\end{split}
\label{eq:observer}
\end{equation}
where $l_{ij}>0$ are positive gains to be designed and
\begin{equation}
\begin{aligned}
e_{ij} &= \hat{x}_{ij} - x_{ij}, \\
\phi_{i1}\left(z \right) &= \mu_{i1}\left|z\right|^{\frac{1}{2}} \operatorname{sign}\left(z\right) +\mu_{i2} z, \, \mu_{i1}, \mu_{i2} > 0,\\
\phi_{i2}\left(z\right) &= \frac{\mu_{i1}^2}{2} \operatorname{sign}\left(z\right)+\mu_{i1} \frac{3}{2}\mu_{i2}\left|z\right|^{\frac{1}{2}} \operatorname{sign}\left(z\right)+\mu_{i2}^2 z.
\end{aligned}
\end{equation}  
Solutions of \eqref{eq:observer} are understood in the sense of \textit{Filippov} \cite{bacciotti2005liapunov}.
To ensure the convergence of \eqref{eq:observer} to the true trajectories of \eqref{eq:system}, an assumption on the interconnections between the subsystems is necessary and is introduced next.
\begin{asm}
There exist nonnegative real constants $\alpha_{ij}$, $\tilde{\alpha}_{ij}$, $\beta_{ij}$ and $\tilde{\beta}_{ij}$ for which 
    \begin{equation}
    \begin{split}
    |\rho_{i1} (x,e,y,u)| & \leq \sum\limits_{j=1}^{N} \tilde{\alpha}_{ij} |e_{j1}| + \tilde{\beta}_{ij} |e_{j2}|,\\
     |\rho_{i2} (x,e,y,u)| &\leq \alpha_{i0} +\sum\limits_{j=1}^{N} \alpha_{ij} |e_{j1}| + \beta_{ij} |e_{j2}|,
          \end{split}
         \label{eq:PertBound}
    \end{equation}
    where \begin{equation}
        \begin{split}
        \rho_{11} (x,e,y,u)&= 0,\\
       \rho_{i1}(x,e,y,u) &= f_{i1}
        \left(x_{(i-1)2}+e_{(i-1)2}, \cdots, x_{12} + e_{12} ,y,u \right) \\ &- f_{i1}\left(x_{(i-1)2}, \cdots,x_{12},y,u \right), \, \forall i\geq 2, \\
          \rho_{i2}(x,e,y,u)  &= f_{i2}\left( x + e, u \right) - f_{i2}(x,u) - \delta_{i}, \, \forall i \geq 1.
        \end{split}
    \end{equation}
    \label{asm:GSTO_interconnection}
  Due to the structural assumption in \eqref{eq:system}—namely, that the first channels are interconnected in cascade—we have
\begin{equation}
\tilde{\alpha}_{ij} = \tilde{\beta}_{ij}=0, \forall j \geq i,  \forall i=1,\cdots,N.
\label{eq:CascadeInt}
\end{equation}
\end{asm}
Building on this and Proposition \ref{prp:observability}, the GSTO result for the class of systems \eqref{eq:system} can be formulated, leading to our main result, which is presented in the next theorem.
\begin{thm}
  Given Assumption \ref{asm:GSTO_interconnection}, and if 
  $l_{ij}>0$ then there exists a value $\gamma_0>0$ such that for every $\gamma \geq \gamma_0$,
  the observer \eqref{eq:observer} converges to the true plant \eqref{eq:system} in finite-time.  
\label{thm:GSTO_interconnected}
\end{thm}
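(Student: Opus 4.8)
The plan is to rewrite the estimation error as a network of mutually perturbing Generalized Super-Twisting Algorithms, put each node into the quadratic (Moreno) Lyapunov form, and show that a suitably weighted sum of the nodal Lyapunov functions is a nonsmooth \emph{strong} Lyapunov function once $\gamma$ is large enough; a comparison argument then yields finite-time convergence. Subtracting \eqref{eq:system} from \eqref{eq:observer} and using the definitions of $\rho_{i1},\rho_{i2}$, the errors $e_{ij}=\hat x_{ij}-x_{ij}$ satisfy, for each $i$, $\dot e_{i1}=g_i(y,u,t)\,(-\gamma l_{i1}\phi_{i1}(e_{i1})+e_{i2})+\rho_{i1}$ and $\dot e_{i2}=-\gamma^2 l_{i2}\,g_i(y,u,t)\,\phi_{i2}(e_{i1})+\rho_{i2}$, with $\rho_{11}\equiv 0$ and, by \eqref{eq:CascadeInt}, $\rho_{i1}$ depending only on $e_{12},\dots,e_{(i-1)2}$. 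Introducing $\zeta_i=(\phi_{i1}(e_{i1}),\ e_{i2}/\gamma)^{\top}$, using the identity $\phi_{i2}(z)=\phi_{i1}'(z)\phi_{i1}(z)$ (immediate from the definitions), and noting that the scaling $e_{i2}\mapsto e_{i2}/\gamma$ collapses the $\gamma^{2}$ into a single $\gamma$, the error dynamics take the form $\dot\zeta_i=\gamma\,g_i(y,u,t)\,\phi_{i1}'(e_{i1})\,A_i\zeta_i+(\phi_{i1}'(e_{i1})\rho_{i1},\ \rho_{i2}/\gamma)^{\top}$ with $A_i=\bigl(\begin{smallmatrix}-l_{i1}&1\\-l_{i2}&0\end{smallmatrix}\bigr)$.

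Since $l_{i1},l_{i2}>0$, each $A_i$ is Hurwitz, so I would take $P_i=P_i^{\top}>0$ with $A_i^{\top}P_i+P_iA_i=-I$, set $V_i=\zeta_i^{\top}P_i\zeta_i$ and $V=\sum_{i=1}^{N}\lambda_iV_i$ with weights $\lambda_i>0$ to be fixed, and differentiate along Filippov solutions (the nonsmooth calculus of \cite{moreno2008lyapunov} applies since $V$ is continuous and a.e.\ differentiable): $\dot V_i=-\gamma g_i\phi_{i1}'(e_{i1})|\zeta_i|^{2}+2\zeta_i^{\top}P_i(\phi_{i1}'(e_{i1})\rho_{i1},\ \rho_{i2}/\gamma)^{\top}$. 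Three facts would drive the estimates: (i) because $\phi_{i1}'(e_{i1})=\tfrac{\mu_{i1}}{2}|e_{i1}|^{-1/2}+\mu_{i2}\ge\mu_{i2}$ and $|e_{i1}|^{1/2}\le|\zeta_{i1}|/\mu_{i1}\le|\zeta_i|/\mu_{i1}$, one has $\phi_{i1}'(e_{i1})|\zeta_i|^{2}\ge\tfrac{\mu_{i1}^{2}}{2}|\zeta_i|+\mu_{i2}|\zeta_i|^{2}$, hence $\sum_i\phi_{i1}'(e_{i1})|\zeta_i|^{2}\ge\kappa_1V^{1/2}+\kappa_2V$, the homogeneity-breaking structure that forces finite time; (ii) the interconnection $\rho_{i1}$ carries the \emph{same} factor $\phi_{i1}'(e_{i1})$ as the diagonal term and, by \eqref{eq:PertBound}--\eqref{eq:CascadeInt} with $|e_{j2}|=\gamma|\zeta_{j2}|$, obeys $|\rho_{i1}|\le\gamma\sum_{j<i}\tilde\beta_{ij}|\zeta_{j2}|$, a $\gamma$-linear, strictly lower-triangular bound; (iii) the disturbance term gives $2\zeta_i^{\top}P_i(0,\rho_{i2}/\gamma)^{\top}\le\tfrac{2\|P_i\|}{\gamma}|\zeta_i|\bigl(\alpha_{i0}+\sum_j\alpha_{ij}|\zeta_{j1}|/\mu_{j2}+\gamma\beta_{ij}|\zeta_{j2}|\bigr)$, i.e.\ an $O(\gamma^{-1})$ constant, an $O(\gamma^{-1})$ lower-order part, and an $O(1)$ all-to-all coupling through the $\beta_{ij}$.

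Next I would fix the weights recursively so that $\lambda_1\gg\lambda_2\gg\cdots\gg\lambda_N$, which should let the strictly lower-triangular couplings be passed down the chain and absorbed by the diagonal dissipative terms, and then take $\gamma$ large; combining this with (i)--(iii) one should reach an estimate of the form $\dot V\le-\gamma(\kappa_1V^{1/2}+\kappa_2V)+(c'/\gamma+c'')(V^{1/2}+V)$ with $\kappa_1,\kappa_2,c',c''>0$ independent of $\gamma$. In particular the non-vanishing uncertainty bound $\alpha_{i0}$ can only be rejected thanks to the discontinuous term $\tfrac{\mu_{i1}^{2}}{2}\operatorname{sign}(\cdot)$ inside $\phi_{i2}$, which is exactly why no smooth or high-gain surrogate achieves the same. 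Hence there is $\gamma_0>0$ such that, for all $\gamma\ge\gamma_0$, $\dot V\le-\tfrac{\gamma}{2}(\kappa_1V^{1/2}+\kappa_2V)<0$ off $\zeta=0$, and a comparison lemma makes $V$, hence $\zeta$ and therefore $e$, reach zero in finite time, i.e.\ \eqref{eq:observer} converges to \eqref{eq:system} in finite time.

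The hard part will be the perturbation domination in steps (ii)--(iii): the first-channel interconnection $\rho_{i1}$ appears multiplied by the \emph{singular} factor $\phi_{i1}'(e_{i1})$, which blows up as $e_{i1}\to0$ and, unlike the classical GSTA perturbation, does not vanish with $e_{i1}$. It is precisely the cascade constraint \eqref{eq:CascadeInt} that makes this coupling strictly lower-triangular, so that after a Young split the singular pieces must first be matched against the \emph{same} node's diagonal term and only the residuals passed down the cascade to be soaked up by the large-$\gamma$ diagonal terms of the earlier nodes, the all-to-all $\beta_{ij}$-couplings and the constants $\alpha_{i0}$ being absorbed last by enlarging $\gamma$. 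Keeping this bookkeeping consistent --- tracking which cross terms inherit $\phi_{i1}'(e_{i1})$, and choosing the weights $\lambda_i$ before the gain $\gamma$ --- is the delicate core of the argument.
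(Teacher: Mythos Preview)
Your setup---the Moreno change of variables $\zeta_i=(\phi_{i1}(e_{i1}),e_{i2}/\gamma)^{\top}$, the quadratic form $V_i=\zeta_i^{\top}P_i\zeta_i$, and the lower bound $\phi_{i1}'(e_{i1})|\zeta_i|^{2}\ge\tfrac{\mu_{i1}^{2}}{2}|\zeta_i|+\mu_{i2}|\zeta_i|^{2}$---matches the paper exactly. The gap is precisely where you flagged it: the first-channel cross term $2\phi_{i1}'(e_{i1})\,\rho_{i1}\,b_1^{\top}P_i\zeta_i$ cannot be handled globally by weights plus Young's inequality. After any Young split that feeds the $|\zeta_i|$-factor into node~$i$'s diagonal, the residual you must pass down the cascade is of the form $c\,\phi_{i1}'(e_{i1})\,|\rho_{i1}|^{2}$, and with $|\rho_{i1}|\lesssim\gamma\sum_{j<i}|\zeta_j|$ this becomes $c'\gamma\,\phi_{i1}'(e_{i1})\sum_{j<i}|\zeta_j|^{2}$. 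The dissipation available at node $j<i$ is $-\lambda_j\gamma g_{j,m}\,\phi_{j1}'(e_{j1})|\zeta_j|^{2}\le-\lambda_j\gamma g_{j,m}\mu_{j2}|\zeta_j|^{2}$, so absorbing the residual would require $\phi_{i1}'(e_{i1})\le C\lambda_j/\lambda_i$ \emph{uniformly}---but $\phi_{i1}'(e_{i1})=\tfrac{\mu_{i1}}{2}|e_{i1}|^{-1/2}+\mu_{i2}$ is unbounded as $e_{i1}\to0$ while $e_{i2}$ and the earlier $\zeta_j$ remain away from zero. No fixed weight ratio covers this region, so a global inequality of the form $\dot V\le-\gamma(\kappa_1V^{1/2}+\kappa_2V)+O(1)(V^{1/2}+V)$ is not obtainable along these lines.

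The paper does not attempt a global differential inequality. It first keeps the $\rho_{i1}$-cross term together with the diagonal and obtains $\dot V_i\le-\eta\gamma g_{i,m}\lambda_{\min}\{Q_i\}\phi_{i1}'(e_{i1})\|\xi_i\|^{2}+\tfrac{2\|P_i\|}{\gamma}\|\xi_i\||\rho_{i2}|$ \emph{only} on the region $\|\xi_i\|\ge\tfrac{2\lambda_{\max}\{P_i\}}{(1-\eta)\gamma g_{i,m}\lambda_{\min}\{Q_i\}}|\rho_{i1}|$; summing and choosing $\gamma$ large yields ultimate uniform boundedness of all $\xi_j$. The cascade is then exploited \emph{sequentially in time}: since $\rho_{11}\equiv0$, the inequality for $i=1$ holds for all $\xi_1$, and boundedness of $\xi_2,\dots,\xi_N$ lets you pick $\mu_{1j}$ (or $\gamma$) so that $\dot V_1\le-c_1V_1^{1/2}$, forcing $\xi_1\to0$ in a finite time $t_1$; for $t\ge t_1$ one has $\rho_{21}\equiv0$, so the same reasoning gives $\xi_2\to0$ at some $t_2$; and so on down to $\xi_N$. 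Replacing your simultaneous weighted-sum argument by this two-stage ``ultimate boundedness $+$ node-by-node finite-time cascade'' is what closes the proof.
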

   \begin{rem}
   Note that selecting $\mu_{i1} = 0$ observer \eqref{eq:observer} is continuous and becomes a High-Gain Observer (HGO). However, the HGO cannot achieve exact state estimation in the presence of non vanishing uncertainties/perturbations $\delta_i \neq 0$, so that Theorem \ref{thm:GSTO_interconnected} requires gains $\mu_{ij}$ to be strictly positive. For such systems a continuous observer can, at best, ensure input-to-state stability with respect to the unknown input. %
     This underscores the necessity of discontinuity in the observer to achieve the desired finite-time convergence.
   \end{rem}
   It can also be shown that the estimation error remains bounded for bounded additive measurement noise. 
   The next section discusses and proves these claims. 
\section{Convergence proof using a quadratic Lyapunov function}
\label{section:convergence}
In this section, the proof of Theorem \eqref{thm:GSTO_interconnected} is established. In what follows we omit the dependencies of $g_i$, $\rho_{i}$ and $\delta_i$ for better readability.
The estimation error dynamics is given by
    \begin{equation}
    \begin{split}
        \dot{e}_{i1} &=  - \gamma l_{i1}g_i\phi_{i1}(e_{i1}) + g_1 e_{i2} + \rho_{i1},\\
        \dot{e}_{i2} &= - \gamma^2 l_{i2} g_i \phi_{i2} (e_{i1}) +  \rho_{i2},\, \, \forall i \in \{ 1, \cdots, N\}. 
    \end{split}
    \label{eq:ErrorEstimacion}
    \end{equation}
%\cite{moreno2008lyapunov,moreno2012strict,davila2009optimal,moreno2011lyapunov} 
In \cite{moreno2008lyapunov,moreno2011lyapunov} a quadratic
Lyapunov function (LF), that is continuous but not Lipschitz continuous,
has been introduced for the analysis of the convergence and robustness
properties of Super-Twisting-like algorithms. This LF
is quadratic not in the state vector, but in a vector 
\begin{equation}
\epsilon_{i}^{\top}=\varphi_{i}^{\top}\left(e_{i}\right)=
\begin{bmatrix}
\phi_{i1}\left(e_{i1}\right)\  & e_{i2}
\end{bmatrix}\ ,\label{Vect_chi}
\end{equation}
where $\varphi_{i}$ is a global homeomorphism \cite{moreno2011lyapunov}. 
To take the time derivative
of the LF it is necessary to calculate the time derivative of $\epsilon_i$,
that is given by (where it exists)
\begin{align*}
\dot{\epsilon}_i & =\phi_{i1}^{\prime}\left(e_{i1}\right)
\begin{bmatrix}
-l_{i1}\gamma g_i \phi_{i1}\left(e_{i1}\right)+g_i e_{i2}+\rho_{i1} \\
-l_{i2}\gamma ^{2} g_i \phi_{i1}\left(e_{1}\right)
\end{bmatrix} +
\begin{bmatrix}
0 \\
\rho_{i2}
\end{bmatrix} \\
& =\phi_{i1}^{\prime}\left(e_{i1}\right)\left\{ g_i \left(A_{0}-\Gamma L_{i}C_{0}\right)\epsilon_i + b_1 \rho_{i1}\right\}  + b_0 \rho_{i2} \ ,
\end{align*}
with
\[
A_{0}= \begin{bmatrix}
0 & 1 \\
0 & 0
\end{bmatrix},\, 
b_{1}= \begin{bmatrix}
1 \\
0
\end{bmatrix},\, 
b_{0}= \begin{bmatrix}
0 \\
1
\end{bmatrix},\, 
L_{i}= \begin{bmatrix}
l_{i1} \\
l_{i2}
\end{bmatrix}, \]
\[
C_{0} = \begin{bmatrix}
1 & 0
\end{bmatrix}, \,
\Gamma = \diag\{\gamma, \gamma^2\},
%\begin{bmatrix}
%\gamma & 0 \\
%0 & \gamma^2
%\end{bmatrix},
\]
where we have used the error equation \eqref{eq:ErrorEstimacion} and the relation
\[
\phi_{i2}\left(e_{i1}\right) = \phi_{i1}^{\prime}\left(e_{i1}\right) \phi_{i1}\left(e_{i1}\right)
\]
that can be easily established. Note that the characteristic polynomial of the matrix $\left(A_{0}-\Gamma L_{i}C_{0}\right)$
is
\begin{align*}
p\left(s\right) & %=\det\left(s\mathbb{I}-\left(A_{0}-\Gamma L_{i}C_{0}\right)\right)
=s^{2}+\gamma l_{i1}s+\gamma^{2}l_{i2} %\\&
=\left(s-\gamma \lambda_{i1}\right)\left(s-\gamma \lambda_{i2}\right),
\end{align*}
where $\lambda_{i1}$, $\lambda_{i2}$ are the eigenvalues of the (Hurwitz)
matrix $A_{il}=\left(A_{0}-L_{i}C_{0}\right)$, i.e matrix $\left(A_{0}-\Gamma L_{i}C_{0}\right)$
with $\gamma =1$. This shows that the eigenvalues of $\left(A_{0}-\Gamma L_{i}C_{0}\right)$
are $\gamma \lambda_{i1}$, $\gamma \lambda_{i2}$, multiples of the eigenvalues
of $\left(A_{0}-L_{i}C_{0}\right)$. 

Similarly to the proof method for High Gain Observer
\cite{khalil2002nonlinear}, we introduce here a further change of
variables
\[
\xi_{i} = \gamma \Gamma^{-1} \epsilon_{i}
= \begin{bmatrix}
\epsilon_{i1}, &
\frac{1}{\gamma}\epsilon_{i2}    
\end{bmatrix}^{\top},
%=\left[\begin{array}{c}
%\frac{\theta}{\gamma}\epsilon_{i1}\\
%\frac{\theta}{\gamma^{2}}\epsilon_{i2}
%\end{array}\right],
\]
we obtain
(since $\Gamma^{-1}A_{0}\Gamma=\gamma A_{0}$ and $C_{0}\Gamma=\gamma C_{0}$)
\begin{align*}
\dot{\xi}_i 
& = \phi_{i1}^{\prime}\left(e_{i1}\right)
\left\{ \gamma g_i \left(A_{0}- L_{i}C_{0}\right) \xi_i + b_1 \rho_{i1} \right\}  
+ b_0  \frac{1}{\gamma}  \rho_{i2}.
\end{align*}
Using for each subsystem a quadratic LF \cite{moreno2011lyapunov}
\begin{equation}
V_i\left(\xi_i\right)=\xi_i^{\top}P_i\xi_i,
\end{equation}
where $P_i$ is the unique, symmetric and positive definite ($P_i=P_i^{\top}>0$)
solution of the Algebraic Lyapunov Equation 
\[
\left(A_{0}-L_{i}C_{0}\right)^{\top}P_i+P_i\left(A_{0}-L_{i}C_{0}\right)=-Q_i,
\]
%for $Q_i=Q_i^{\top}>0$, an arbitrary positive definite and symmetric matrix \cite{khalil2002nonlinear}.
for an arbitrary $Q_i=Q_i^{\top}>0$ \cite{khalil2002nonlinear}.
The derivative of $V_i$ along the solutions of the corresponding error is %equation (almost everywhere) is given by
\begin{equation}
\begin{split}
\dot{V}_i
&= \phi_{i1}^{\prime}\left(e_{i1}\right)
\left\{ - \gamma g_i \xi_i^{\top} Q_i \xi_i  + 2 \rho_{i1} b_1^{\top} P_i\xi_i  
\right\} +  
\frac{2}{\gamma} \rho_{i2} b_0^{\top}P_i\xi_i.
\end{split}
\end{equation}
We consider first the term in the curly brackets. It satisfies, for any $0 < \eta <1$,
\begin{equation}
\begin{split}
 &- \gamma  g_i \xi_i^{\top} Q_i \xi_i  + 2 \rho_{i1}  b_1^{\top} P_i\xi_i 
  \leq - \eta \gamma  g_{i,m} \lambda_{\min}\left\{ Q_i\right\} 
\left\Vert \xi_i\right\Vert^{2} \\ &- \left(1-\eta\right) \gamma  g_{i,m} \lambda_{\min}\left\{ Q_i\right\} 
\left\Vert \xi_i\right\Vert^{2} + 2 \left\Vert P_i \right\Vert \left\Vert \xi_i \right\Vert  |\rho_{i1}|, \\
& \leq - \eta \gamma  g_{i,m} \lambda_{\min}\left\{ Q_i\right\} 
\left\Vert \xi_i\right\Vert^{2}, \, 
\\ & \forall \left\Vert \xi_i\right\Vert  \geq \frac{2 \lambda_{\max}\left\{ P_i\right\} }{\left(1-\eta\right) \gamma }g_{i,m} \lambda_{\min}\left\{ Q_i\right\} |\rho_{i1}| ,
\end{split}
\end{equation}
where $\lambda_{\min}\left\{ Q_i \right\} $ is the minimal eigenvalue
of $Q_i$, $\left\Vert \xi_i\right\Vert $ is the Euclidean norm of $\xi_i$
and $\left\Vert P_i\right\Vert =\lambda_{\max}\left\{ P_i \right\} $ is
the induced (Euclidean) norm of matrix $P_i$ and $g_{i,m}$ is the minimal value of $g_i$. %
Noting that $\phi_{i1}^{\prime}\left(e_{i1}\right)\geq0$,
since $\phi_{i1}\left(e_{i1}\right)$ is monotone increasing, we 
further obtain
\begin{equation}
\begin{split}
\label{eq:DerV_1}
\dot{V}_i &\leq - \eta \gamma  g_{i,m} \lambda_{\min}\left\{ Q_i\right\} \phi_{i1}^{\prime}\left(e_{i1}\right)
\left\Vert \xi_i\right\Vert^{2}  + \frac{2 \left\Vert P_i \right\Vert}{ \gamma}  \left\Vert \xi_i \right\Vert |\rho_{i2}|, \\
&\forall \left\Vert \xi_i\right\Vert  \geq \frac{2 \lambda_{\max}\left\{ P_i\right\}|\rho_{i1}|}{\left(1-\eta\right) \gamma g_{i,m} \lambda_{\min}\left\{ Q_i\right\}}.
\end{split}
\end{equation}
Recall the standard inequality for quadratic forms
\[
\lambda_{\min}\left\{ P_i \right\} \left\Vert \xi_i \right\Vert _{2}^{2}\leq V_i\left(\xi_i\right) = \xi_i^{\top}P_i\xi_i\leq\lambda_{\max}\left\{ P_i\right\} \left\Vert \xi_i\right\Vert _{2}^{2}\ ,
\]
where
\begin{equation*}
\begin{split}
\left\Vert \xi_i\right\Vert _{2}^{2} & =\xi_{i1}^{2}+\xi_{i2}^{2}=\phi_{i1}^{2}\left(e_{i1}\right)+\frac{1}{\gamma^{2}}e_{i2}^{2},\\
 & =\left(\mu_{i1}^{2}\left\vert e_{i1}\right\vert +2\mu_{i1}\mu_{i2}\left\vert e_{i1}\right\vert ^{\frac{3}{2}}+\mu_{i2}^{2}\left\vert e_{i1}\right\vert ^{2}\right)+\frac{1}{\gamma^{2}}e_{i2}^{2},
 \end{split}
\end{equation*}
is the Euclidean norm of $\xi_i$. Besides, note that the inequality 
\begin{equation}
\begin{split}
\left\vert e_{i1}\right\vert ^{\frac{1}{2}}&\leq\frac{1}{\mu_{i1}}\left\vert \phi_{i1}\left(e_{i1}\right)\right\vert \leq \frac{1}{\mu_{i1}}\left\Vert \xi_i\right\Vert %\\ &
\leq \frac{1}{\mu_{i1}\lambda_{\min}^{\frac{1}{2}}\{P_i\}}  V^{\frac{1}{2}}\left(\xi_i\right)
\end{split}\label{xpineq}
\end{equation}
is satisfied for any $\mu_{i1}>0$, and therefore
\[
-\frac{1}{\left\vert e_{i1}\right\vert ^{\frac{1}{2}}}\leq- \frac{\mu_{i1}}{\left\Vert \xi_i\right\Vert } \leq -\mu_{i1}\lambda_{\min}^{\frac{1}{2}}\{P_i\}V_i^{-\frac{1}{2}} \left(\xi_i\right).
\]
Since 
\begin{equation}
\phi_{i1}^{\prime}\left(e_{i1}\right)=\frac{1}{2}\mu_{i1}\frac{1}{\left\vert e_{i1}\right\vert ^{\frac{1}{2}}}+ \mu_{i2} 
\end{equation}
it follows from \eqref{eq:DerV_1} that, $\forall \left\Vert \xi_i\right\Vert  \geq \frac{2 \lambda_{\max}\left\{ P_i\right\} }{\left(1-\eta\right) \gamma g_{i,m} \lambda_{\min}\left\{ Q_i\right\}} |\rho_{i1}|$ 
\begin{equation}
  \begin{split}
\dot{V}_i & \leq  
- \frac{1}{2}\eta g_{i,m} \lambda_{\min}\left\{ Q_i\right\}
\mu_{i1}^2\gamma\left\Vert \xi_i\right\Vert 
\\ &- \eta \gamma  g_{i,m} \lambda_{\min}\left\{ Q_i\right\} \mu_{i2} \left\Vert \xi_i\right\Vert^{2}
  + \frac{2}{\gamma} \left\Vert P_i \right\Vert \left\Vert \xi_i \right\Vert  |\rho_{i2}|, \label{eq:DerV_2} \\
\dot{V}_i & \leq  
- \frac{\eta \mu_{i1}^2\gamma g_{i,m} \lambda_{\min}\left\{ Q_i\right\}}{2\lambda_{\max}^{\frac{1}{2}}\left\{ P_i\right\}} V_i^{\frac{1}{2}}\left(\xi_i\right) 
\\ &- \frac{\eta \gamma  g_{i,m} \lambda_{\min}\left\{ Q_i\right\} \mu_{i2}}{\lambda_{\max}\left\{ P_i\right\}}  V_i\left(\xi_i\right)
  +  \frac{2}{\gamma} \left\Vert P_i \right\Vert \left\Vert \xi_i \right\Vert |\rho_{i2}|.    
  \end{split}
\end{equation}
In the absence of interconnection terms and unknown inputs, i.e. for $\rho_{i1}=\rho_{i2}=0$, $V_i\left(\xi_i\left(t\right)\right) < - c_i V_i^{\frac{1}{2}}(\xi_i(t))$ with $c_i> 0$, and the origin $\xi_i=0$ is finite-time stable according to Theorem 4.2 in \cite{bhat2000finite}. This can be interpreted as an extension of the MGSTA convergence result \cite{lopez2015qualitative, lopez2019generalised} when $g_i \neq 1$.  

To account for the interconnections, consider for the estimation error of the full system the following LF candidate
\begin{equation}
    V\left(\xi\right) = \sum_{i=1}^{N}  V_i\left(\xi_i\right).
    \label{eq:LFC}
\end{equation}
Using \eqref{eq:DerV_2}, the time derivative of the candidate LF satisfies $\forall \left\Vert \xi_i\right\Vert  \geq \frac{2 \lambda_{\max}\left\{ P_i\right\}}{\left(1-\eta\right) \gamma g_{i,m} \lambda_{\min}\left\{ Q_i\right\}} |\rho_{i1}|$
\begin{align*}
    &\dot{V}= \sum_{i=1}^{N}  \dot{V}_i 
    \leq - \sum_{i=1}^{N} 
\frac{1}{2} \eta g_{i,m} \lambda_{\min}\left\{ Q_i\right\}\mu_{i1}^2\gamma \left\Vert \xi_i\right\Vert  
\\ &- \sum_{i=1}^{N}  \eta \gamma  g_{i,m} \lambda_{\min}\left\{ Q_i\right\} \mu_{i2} \left\Vert \xi_i\right\Vert^{2}
  %\\&
  + 2 \sum_{i=1}^{N}  \frac{\left\Vert P_i \right\Vert}{\gamma} \left\Vert \xi_i \right\Vert |\rho_{i2}|\, .
\end{align*}
Note that, since
\begin{align*}
\left\vert e_{i1} \right\vert & \leq \frac{1}{\mu_{i2}} \left\vert \phi_{i1}\left(e_{i1}\right) \right\vert = \frac{1}{\mu_{i2}} \left\vert \epsilon_{i1} \right\vert = \frac{1}{\mu_{i2}} \left\vert \xi_{i1} \right\vert, \\
\left\vert e_{i2} \right\vert & = \left\vert \epsilon_{i2} \right\vert = \gamma\left\vert \xi_{i2} \right\vert,
\end{align*}
and using the bounds on the terms $\rho_{i1},\rho_{i2}$ in \eqref{eq:PertBound}, %can be given as
%        \begin{equation}
%        \begin{split}
%         |\rho_{i1}| & \leq \sum_{j=1}^{N} \frac{\gamma}{\theta} \left( \frac{\tilde{\alpha}_{ij}}{ \mu_{j2}} \left\vert \xi_{j1} \right\vert + \tilde{\beta}_{ij} \gamma\left\vert \xi_{j2} \right\vert \right), \\
%         |\rho_{i2}| &\leq \alpha_0 + \sum_{j=1}^{N} \frac{\gamma}{\theta} \left( \frac{\alpha_{ij}}{\mu_{j2}} \left\vert \xi_{j1} \right\vert + \beta_{ij} \gamma\left\vert \xi_{j2} \right\vert \right),
%         \end{split}
%        \end{equation}
we obtain %and thus
    \begin{equation}
    \begin{split}
        \frac{1}{\gamma}|\rho_{i1}|
        & \leq \frac{1}{\gamma}\sum_{j=1}^{N} \left( \frac{\tilde{\alpha}_{ij}}{ \mu_{j2}} + \tilde{\beta}_{ij} \gamma \right) \left\Vert \xi_j \right\Vert, \\
         \frac{1}{\gamma}|\rho_{i2}|
         & \leq \frac{1}{\gamma}\alpha_0 + 
         \frac{1}{\gamma }
         \sum_{j=1}^{N}  \left( \frac{\alpha_{ij}}{\mu_{j2}} + \beta_{ij} \gamma \right)
         \left\Vert \xi_j \right\Vert.
         \end{split}
    \end{equation}
Since $\tilde{\alpha}_{ii}=\tilde{\beta}_{ii}=0$, we arrive at
\begin{equation}
\begin{split}
    \dot{V}
         &\leq - \sum_{i=1}^{N} 
\left( \frac{\gamma}{2} \eta g_{i,m} \lambda_{\min}\left\{ Q_i\right\}\mu_{i1}^2 - 2 \lambda_{\max}\left\{ P_i\right\} \frac{\alpha_0}{\gamma } \right)
\left\Vert \xi_i\right\Vert  
\\ &- \sum_{i=1}^{N}  \eta \gamma  g_{i,m} \lambda_{\min}\left\{ Q_i\right\} \mu_{i2} \left\Vert \xi_i\right\Vert^{2}  \\
& + \sum_{i=1}^{N} \sum_{j=1}^{N} \frac{2  \lambda_{\max}\left\{ P_i\right\} }{\gamma }
          \left( \frac{\alpha_{ij}}{\mu_{j2}} + \beta_{ij} \gamma \right)
         \left\Vert \xi_i \right\Vert
         \left\Vert \xi_j \right\Vert,
\end{split}
\label{eq:der_V4}
\end{equation}
which is satisfied for the subset $\Omega$ of the state space, described by the following set of inequalities: 
$\forall  i=1,\cdots,N$
\begin{equation}
\begin{split}
    \left\Vert \xi_i\right\Vert  &\geq \frac{2 \lambda_{\max}\left\{ P_i\right\} }{\left(1-\eta\right) g_{i,m} \lambda_{\min}\left\{ Q_i\right\}} \frac{1}{\gamma } \sum_{j=1,j\neq i}^{N} \left( \frac{\tilde{\alpha}_{ij}}{ \mu_{j2}} + \tilde{\beta}_{ij} \gamma \right) \left\Vert \xi_j \right\Vert
    \\ &\geq \frac{2 \lambda_{\max}\left\{ P_i\right\} }{\left(1-\eta\right) \gamma  g_{i,m} \lambda_{\min}\left\{ Q_i\right\}} |\rho_{i1}|. 
    \end{split}
    \label{eq:NegSet}
\end{equation}
Using the classical inequality, for all $x,y\geq 0$
\begin{equation}
    xy \leq \frac{1}{2}x^2 + \frac{1}{2}y^2,
\end{equation}
on the last term of \eqref{eq:der_V4}, we obtain
\begin{equation}
\begin{split}
\dot{V}   & \leq - \sum_{i=1}^{N} 
\left( \frac{\gamma }{2} \eta g_{i,m} \lambda_{\min}\left\{ Q_i\right\}\mu_{i1}^2- 2 \lambda_{\max}\left\{ P_i\right\} \frac{\alpha_0}{\gamma } \right)
\left\Vert \xi_i\right\Vert \\
& - \sum_{i=1}^{N} 
    \gamma  \left\{ 
     \eta g_{i,m} \lambda_{\min}\left\{ Q_i\right\} \mu_{i2} \right. \\ & 
     \left. - \sum_{j=1}^{N} 
\left[ \frac{ \lambda_{\max}\left\{ P_i\right\} }{\gamma ^2}
    \left( \frac{\alpha_{ij}}{\mu_{j2}} + \beta_{ij} \gamma \right) \right. \right. \\ & \left. \left. +
    \frac{ \lambda_{\max}\left\{ P_j\right\} }{\gamma ^2}
          \left( \frac{\alpha_{ii}}{\mu_{i2}} + \beta_{ii} \gamma \right)
    \right] 
    \right\}
         \left\Vert \xi_i \right\Vert^2.
         \end{split}
\end{equation}

It is clear that there is a value $\gamma_{\min}$ such that
\begin{align}
    \dot{V} & \leq - \sum_{i=1}^{N} c_i\left(\gamma\right) V_i^{\frac{1}{2}}\left(\xi_i\right) 
- \sum_{i=1}^{N} \tilde{c}_i\left(\gamma\right)  V_i\left(\xi_i\right), \forall \xi \in \Omega
\label{eq:LyapIneq}
\end{align}
with $c_i\left(\gamma\right) > 0$ and $\tilde{c}_i\left(\gamma\right)>0$ for every $\gamma \geq \gamma_{\min}$.

This implies that the trajectories of the observer are ultimately and uniformly bounded, i.e. they converge to a neighborhood of the origin $e=0$ and remain there for all future times \cite{lopez2020finite}. In fact, the smallest level set of the LF that is contained in $\Omega$ is a positively invariant set.

The previous result is true for arbitrary interconnections of the subobservers without self interconnection $\tilde{\alpha}_{ii}=\tilde{\beta}_{ii}=0$. 
For the cascade interconnection in the first channel in Assumption \ref{asm:GSTO_interconnection}, a stronger conclusion is possible. 
In this case $\rho_{11}=0$ and then \eqref{eq:DerV_2} for $i=1$ becomes 
\begin{equation}
\begin{split}
&\dot{V}_1  \leq  
- \left\{ \frac{\gamma}{2}\eta g_{1,m} \lambda_{\min}\left\{ Q_1\right\}
\mu_{11}^2 
- \frac{2 \left\Vert P_1 \right\Vert \alpha_0}{\gamma}  \right.
  \\ & \left. - \frac{2 \left\Vert P_1 \right\Vert }{\gamma}
    \sum_{j=2}^{N} \left( \frac{\alpha_{ij}}{\mu_{j2}} + \beta_{ij} \gamma \right)
         \left\Vert \xi_j \right\Vert \right\} \left\Vert  \xi_1 \right\Vert \\
         & - \gamma \left[ \eta g_{1,m} \lambda_{\min}\left\{ Q_1\right\} \mu_{12} -
         \frac{2 \left\Vert P_1 \right\Vert }{\gamma^2} \left( \frac{\alpha_1}{\mu_{12}} + \beta_1 \gamma \right) \right]
         \left\Vert \xi_1 \right\Vert^2,
         \label{eq:derV1}
  \end{split}
\end{equation}
which is satisfied for all $\left\Vert \xi_1\right\Vert \geq 0$. 
Since the terms $\left\Vert \xi_j \right\Vert$ remain bounded for every $\gamma \geq \gamma_{\min}$, then there exists $\bar{\mu}_{1j}> 0$ such that $\forall \mu_{1j} \geq \bar{\mu}_{1j}$, \eqref{eq:derV1} becomes $\dot{V}_1 < - c_1 V^{1/2}, \, \, \forall \| \xi_{1} \| > 0$ and for $c_1 > 0$. 
This means that there exists $t_1 \geq t_0$ such that $\| e_{1j}(t_1) \| = 0$ and because the interconnection $\rho_{i1}$ are cascaded, $\rho_{21}$ vanishes for all $t \geq t_1$.
Applying an identical reasoning for $i=2$, and since $\rho_{21}(t) = 0 , \forall t \geq t_1$, we can conclude from \eqref{eq:DerV_2} that $\xi_{2}$ converges to the origin in finite-time which ensures that $\rho_{31}$ vanishes and succesively show that $\xi_3, \cdots, \xi_{N}$ converge to the origin in finite-time. Thus, we conclude that the observer converges to the true state trajectories within a finite-time, thereby completing our proof.
\section{Numerical results}
\begin{figure}[b!]
 \centering
 \includegraphics[width=1\linewidth]{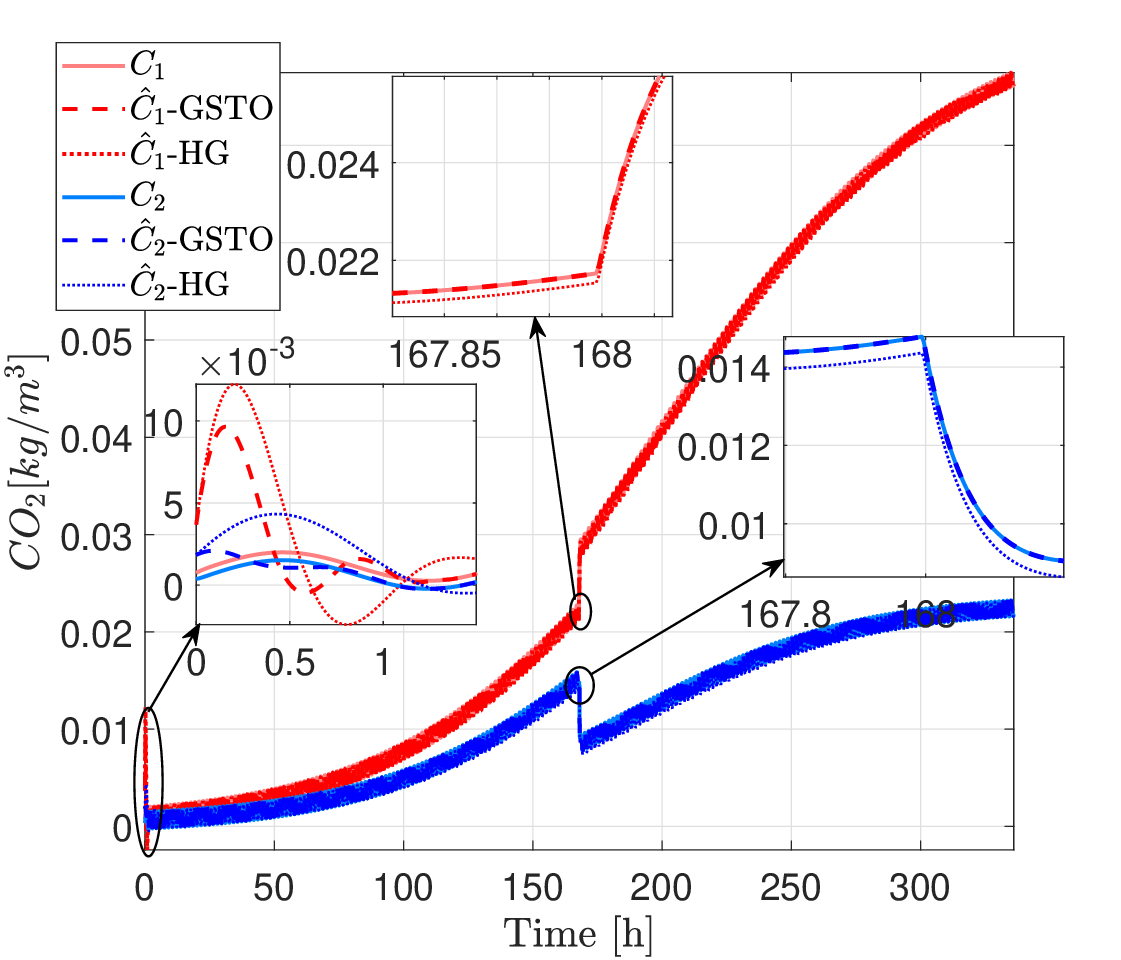}
    \caption{$CO_2$ measurements tracking of the interconnected larvae PUs showing exact convergence of proposed GSTO.}
    \label{fig:CO2}
\end{figure}
\begin{figure}[h!]
    \centering
   \includegraphics[width=1\linewidth]{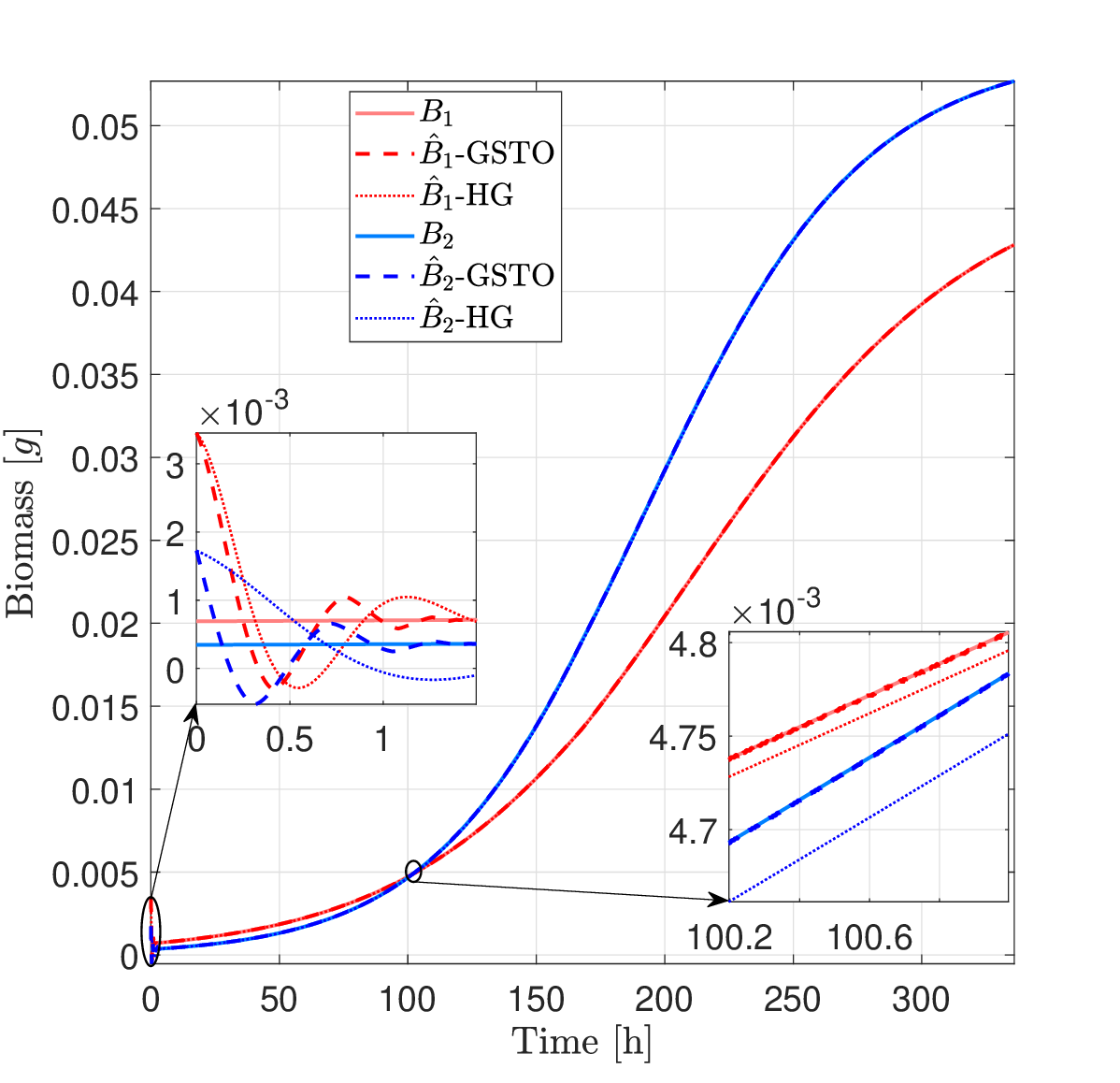}
    \caption{Dry biomass per larval estimation.}
    \label{fig:Biomass}
\end{figure}
 \label{section:numerical_results}
In order to illustrate our results, consider a common problem in food production and insects farming: to estimate dry biomass per larval $B_i$ from $CO_2$ and $O_2$ measurements, $C_i$ and $O_i$, respectively. 
We focus on $N$ larvae production units (PU) interconnected via valves $u_{v,i}$, as it is modeled in \cite{al2022centralized}, to which we incorporate the air rate function from \cite{padmanabha2020comprehensive}.
Let us consider that the $(i+1)$-th PU growing conditions are better than the $i$-th PU, e.g. $T_{i+1} > T_{i}$, s. t. $T_i$ is the temperature of the growing medium $i$, which leads the insects to crawl from the $i$th to the $(i+1)$th PU.
The model becomes
\begin{equation}
 \begin{split}
     \dot{C}_1 &= \alpha_7u_{v1} \left( C_{2} - C_1 \right) + \alpha_9 u_{o1}\left( C_{o} - C_1 \right)  + L_1 \alpha_{15}r_{A1}B_1, \\
     \dot{B}_1 &= \alpha_{20}\alpha_1(1 - \alpha_{17})r_{A1}B_1^2 + \alpha_{2}\alpha_{21} r_{A1}r_{T1}B_1 - \kappa u_{v1}B_1,\\
     &\vdots\\
     \dot{C}_N &= \alpha_7u_{vN} \left( C_{N-1} - C_N \right) + \alpha_9 u_{oN}\left( C_{o} - C_N \right) \\& + L_N \alpha_{15}r_{AN}B_N + \kappa u_{vN} \alpha_{15} r_{AN}B_N,\\
     \dot{B}_N &= \alpha_{20}\alpha_1(1 - \alpha_{17})r_{AN}B_N^2 \\&+ \alpha_{2}\alpha_{21} r_{AN}r_{TN}B_N + \kappa u_{v(N-1)}B_{N-1},\\
 \end{split}  
 \label{eq:example}
\end{equation}
where $C_{o}$ is the outside $CO_2$ and is measured, $r_{Ai} = \frac{O_i}{O_i + C_i}$, $L_i$ the number of larvae of the $i$th PU, while $r_{Ti}$ is unknown, since it depends on the unknown medium temperature. 
Moreover, we assume all parameters known except for $\alpha_1$ and $\kappa$, as they both depend on unmeasurable growing conditions.
Consequently, $\dot{B}_i$ is completely unknown. 
\begin{rem}
  Interconnected food production systems frequently belong to the class of systems described by \eqref{eq:system} due to their rate functions, which are nonlinear expressions dependent on states, inputs, and unknown signals. Moreover, these environmental rate functions are inherently bounded by nature, making Assumption \ref{asm:GSTO_interconnection} naturally satisfied.
\end{rem}
\begin{figure}
    \centering
\includegraphics[width=1\linewidth]{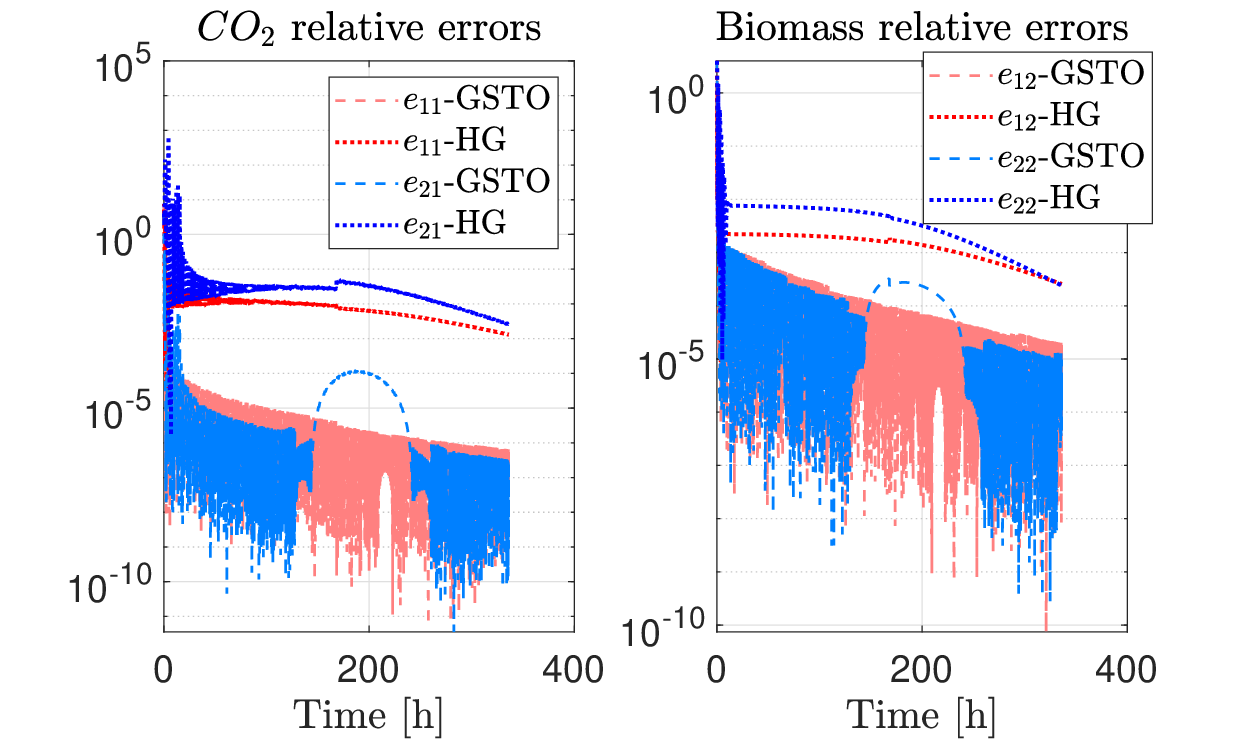}
    \caption{Relative estimation errors with perfect measurements.}
    \label{fig:relative error}
\end{figure}
\begin{figure}
    \centering   
    \vspace{5pt}
    \includegraphics[width=1\linewidth]{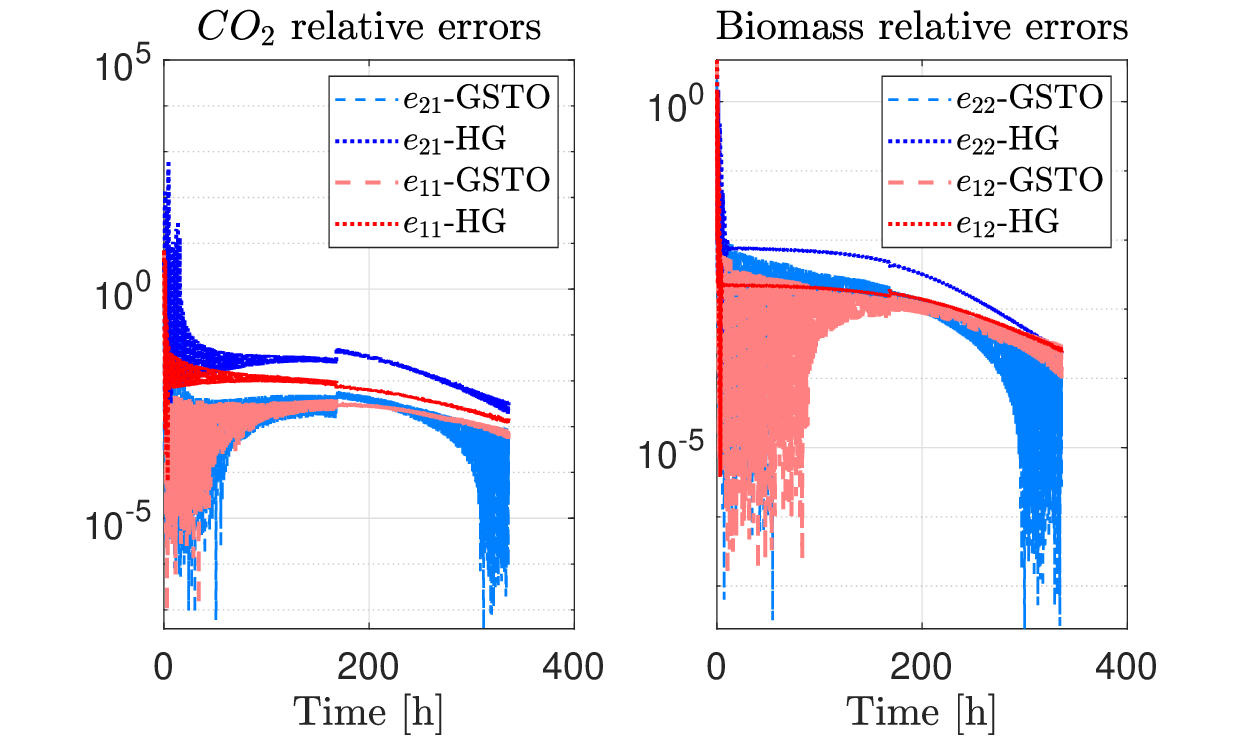}
    \caption{Relative estimation errors in the noisy case.}
    \label{fig:errors_noisy}
\end{figure}
Figures \ref{fig:CO2}, \ref{fig:Biomass} and \ref{fig:relative error} show-cast a comparison between the proposed GSTO \eqref{eq:observer} and a HGO on system \eqref{eq:example} for $N=2$ and parameters taken from \cite{al2022centralized}, simulated for two weeks with a time step of $10[s]$. We set $T_1 = 35^\circ C,  \, T_2 = 40^\circ C, \, \kappa = 10^{-6}$, $r_T$ is the modified log10 function \cite{padmanabha2020comprehensive}, $u_v = 0.4$ when $t\leq 7$[day] and $u_v = 0$ elsewhere. While $u_{o} = 0.4 \forall t$ and $C_{o} = 9.1167 \operatorname{sin}(\frac{t}{900})\times10^{-4}$. The true initial conditions are $\begin{bmatrix}
    7.37\times10^{-4} & 6.9\times10^{-4} & 3.69\times10^{-4} & 3.45\times10^{-4}
\end{bmatrix}^\top$ and the initial guess of each observer is $5$ times the truth. The GSTO's parameters are $l_{i1} =  1.1,\, l_{i1} = 3, \, \gamma_1 = 0.1, \, \gamma_2 = 0.5, \mu = \begin{bmatrix}
    0.03 & 1 \\ 0.01 & 1
\end{bmatrix}$. 
The HGO is obtained by setting $\mu_{i1} = 0$.
The jump in the $CO_2$ concentration in Figure \ref{fig:CO2} corresponds to the closing of the valve at day $7$, where the interactions stop.
Notice that this HG fails to accurately track the $CO_2$ measurements, and increasing its gain is not a feasible option, as it would cause an excessive peaking. This occurs due to the fast dynamics of the measurements and the HG observer's lack of information on biomass dynamics. This is precisely why the discontinuous term is essential in such case; it prevents the peaking phenomenon by enforcing a fast and precise estimation, as illustrated in the figures.
Additive noise was introduced as $ y_1 = C_1 + 0.03\operatorname{sin}(t),\, 
        y_2 = C_2 + 0.1\operatorname{sin}(t)$. Figure \ref{fig:errors_noisy} shows relative errors, where the interconnected GSTO continues to outperform the HGO despite bounded noise. Note that $\gamma$ must be made to balance convergence speed and noise sensitivity, as excessively high values may amplify noise.
\section{Conclusion}
We have introduced an observer that achieves exact state estimation for a strongly observable class of nonlinearly interconnected systems with uncertainties. 
To maintain the strong observability of the class, the interconnection on the measured states are required to be cascaded while the ones affecting the unmeasured states can be arbitrary, multivariate and unknown. 
The observer is designed so that the estimation error dynamics form a network of nonlinearly interconnected GSTAs.
To establish convergence, we extended existing GSTO results by employing a nonsmooth quadratic LF, allowing us to account for nonlinear and partially unknown interconnections between subsystems. Our analysis demonstrates that, in the presence of general nonlinear interconnections and bounded uncertainties, the observer's discontinuity is crucial for ensuring convergence—something unattainable with its continuous counterpart. 
This conclusion is further validated through numerical simulations of an interconnection of larvae production units.
Additionally, our analysis indicates that if the requirement on the cascaded interconnections for this class of systems is lifted, then the proposed GSTO is finite time input-to-state stable, ensuring that estimation errors remain uniformly bounded around the origin. A promising direction for future research is to establish conditions under which a strongly observable system can be transformed into the proposed class.
%Future research should investigate alternative interconnection structures, such as feedback, which may yield even stronger results.
\label{section:conclusion}

\addtolength{\textheight}{-1cm}   % This command serves to balance the column lengths
                                  % on the last page of the document manually. It shortens
                                  % the textheight of the last page by a suitable amount.
                                  % This command does not take effect until the next page
                                  % so it should come on the page before the last. Make
                                  % sure that you do not shorten the textheight too much.

%%%%%%%%%%%%%%%%%%%%%%%%%%%%%%%%%%%%%%%%%%%%%%%%%%%%%%%%%%%%%%%%%%%%%%%%%%%%%%%%

%%%%%%%%%%%%%%%%%%%%%%%%%%%%%%%%%%%%%%%%%%%%%%%%%%%%%%%%%%%%%%%%%%%%%%%%%%%%%%%%

%%%%%%%%%%%%%%%%%%%%%%%%%%%%%%%%%%%%%%%%%%%%%%%%%%%%%%%%%%%%%%%%%%%%%%%%%%%%%%%%

%%%%%%%%%%%%%%%%%%%%%%%%%%%%%%%%%%%%%%%%%%%%%%%%%%%%%%%%%%%%%%%%%%%%%%%%%%%%%%%%

\phantomsection
\addcontentsline{toc}{chapter}{Bibliography} 
%\nocite{*} 
\bibliographystyle{acm}
\bibliography{bib}

\end{document}